\spnewtheorem{thrm}{Theorem}[section]{\sc }{\itshape}
\spnewtheorem{lemm}[thrm]{Lemma}{\sc }{\itshape}
\spnewtheorem{prop}[thrm]{Proposition}{\sc }{\itshape}
\spnewtheorem{corr}[thrm]{Corollary}{\sc }{\itshape}
\spnewtheorem{nttn}[thrm]{Notation}{\sc }{}
\spnewtheorem{defn}[thrm]{Definition}{\sc }{}
\spnewtheorem{xmpl}[thrm]{Example}{\sc }{}
\spnewtheorem{rmrk}[thrm]{Remark}{\sc }{}
\definecolor{mygreen}{rgb}{0,0.6,0}
\definecolor{mygray}{rgb}{0.5,0.5,0.5}
\definecolor{mymauve}{rgb}{0.58,0,0.82}
\definecolor{gray}{RGB}{128, 128, 128}
\definecolor{lightgray}{RGB}{200, 200, 200}
\definecolor{cyan}{RGB}{0, 255, 255}
\definecolor{blue}{RGB}{0, 0, 255}
\definecolor{red}{RGB}{255, 0, 0}
\definecolor{pink}{RGB}{255, 128, 128}
\definecolor{green}{RGB}{0, 128, 0}
\definecolor{lightyellow}{RGB}{255, 255, 200}
\definecolor{purple}{RGB}{128, 0, 128}
\lstdefinestyle{all}
    {basicstyle=\ttfamily\scriptsize,
     keywordstyle=\color{blue}\ttfamily\scriptsize,
     commentstyle=\color{green}\ttfamily\scriptsize,
     stringstyle=\color{red}\ttfamily\scriptsize}
\lstdefinelanguage{hask}{%
    frame=none,
    xleftmargin=2pt,
    belowcaptionskip=\bigskipamount,
    captionpos=b,
    tabsize=2,
    numbers=left,
    numberstyle=\tiny\color{gray},
    emphstyle={\bf},
	morecomment=[s][\color{green}]{\{-}{-\}},
    stringstyle=\mdseries\rmfamily,
    commentstyle=\color{green},
    keywords={},
    keywords=[1]{case, of, data, if, then, else, where, let, in, do},
    keywords=[2]{Chip, Config, CurrencySymbol, TokenName, PubKeyHash, Integer, Value, State, Action, Text, Maybe, Void, TxConstraints,  Contract},
    keywords=[3]{HasNative},
    keywords=[4]{=>},
    keywords=[5]{Just, Nothing, MkChip, MkConfig, SetPrice, Buy},
    keywordstyle=[1]\mdseries\sffamily\color{red},
    keywordstyle=[2]\mdseries\sffamily\color{blue},
    keywordstyle=[3]\mdseries\sffamily\color{green},
    keywordstyle=[4]\mdseries\sffamily,
    keywordstyle=[5]\mdseries\sffamily\color{purple},
    columns=flexible,
    basicstyle=\small\sffamily,
    showstringspaces=false,
    breaklines=false,
    showspaces=false,
    escapeinside={--}{\^^M},escapebegin={\color{green}--},escapeend={},
    literate= {+}{{$+$}}1 {/}{{$/$}}1 {*}{{$*$}}1 {=}{{$=$}}1
              {>}{{$>$}}1 {<}{{$<$}}1 {\\}{{$\lambda$}}1
              {\\\\}{{\char`\\\char`\\}}1
              {->}{{$\rightarrow$}}2 {>=}{{$\geq$}}2 {<-}{{$\leftarrow$}}2
              {<=}{{$\leq$}}2 {=>}{{$\Rightarrow$}}2
              {\ .}{{$\circ$}}2 {\ .\ }{{$\circ$}}2
              {>>}{{>>}}2 {>>=}{{>>=}}2
              {|}{{$\mid$}}1
              {\_}{{\underline{\hspace{2mm}}}}2
}
\lstdefinelanguage{solidity}{%
    frame=none,
    xleftmargin=2pt,
    belowcaptionskip=\bigskipamount,
    captionpos=b,
    tabsize=2,
    numbers=left,
    numberstyle=\tiny\color{gray},
    emphstyle={\bf},
	morecomment=[s][\color{green}]{\{-}{-\}},
    stringstyle=\mdseries\rmfamily,
    commentstyle=\color{green},
    keywords={},
    keywords=[1]{pragma, solidity, contract, event, constructor, require, function, return, emit},
    keywords=[2]{address, uint, mapping},
    keywords=[3]{public, payable, external, view, returns},
    keywords=[4]{=>, +=, -=, =, <=, ==},
    keywords=[5]{msg, sender, transfer, value},
    keywordstyle=[1]\mdseries\sffamily\color{red},
    keywordstyle=[2]\mdseries\sffamily\color{blue},
    keywordstyle=[3]\mdseries\sffamily\color{green},
    keywordstyle=[4]\mdseries\sffamily,
    keywordstyle=[5]\mdseries\sffamily\color{purple},
    columns=flexible,
    basicstyle=\small\sffamily,
    showstringspaces=false,
    breaklines=false,
    showspaces=false,
    escapeinside={--}{\^^M},escapebegin={\color{green}--},escapeend={},
    literate= {+}{{$+$}}1 {/}{{$/$}}1 {*}{{$*$}}1 {=}{{$=$}}1
              {>}{{$>$}}1 {<}{{$<$}}1 {\\}{{$\lambda$}}1
              {\\\\}{{\char`\\\char`\\}}1
              {->}{{$\rightarrow$}}2 {>=}{{$\geq$}}2 {<-}{{$\leftarrow$}}2
              {<=}{{$\leq$}}2 {=>}{{$\Rightarrow$}}2
              {\ .}{{$\circ$}}2 {\ .\ }{{$\circ$}}2
              {>>}{{>>}}2 {>>=}{{>>=}}2
              {|}{{$\mid$}}1
              {\_}{{\underline{\hspace{2mm}}}}2
}
\newcommand\deffont[1]{{\bfseries #1}}
\newcommand\powerset{\f{pow}}
\newcommand\finpow{\f{fin}}
\newcommand\finto{\stackrel{\f{fin}}{\rightharpoonup}}
\newcommand\f[1]{\mathit{#1}}
\newcommand\tf[1]{\mathsf{#1}}
\newcommand\at{\text{@}}
\newcommand\tx{\f{tx}}
\newcommand\txs{\f{T\hspace{-2.1pt}xs}}
\newcommand\inlinehask[1]{\lstinline[language=hask]{#1}}
\newcommand\inlinesolidity[1]{\lstinline[language=solidity]{#1}}
\newcommand\utxo{\f{UTxO}}
\newcommand\valid{\f{valid}}
\newcommand\liff{\Longleftrightarrow}
\newcommand\aeq{\mathrel{=_{\alpha}}}
\newcommand\ssm{{{:}\text{=}}}
\newcommand\minus{{\text{-}}}
\begin{document}
\title{UTxO- vs account-based smart contract blockchain programming paradigms}
\author{Lars Br\"unjes\inst{1} \and Murdoch J. Gabbay\inst{2}}
\institute{IOHK \and Heriot-Watt University, Scotland, UK}
\date{}
\maketitle
\begin{abstract}
We implement two versions of a simple but illustrative smart contract: one in Solidity on the Ethereum blockchain platform, and one in Plutus on the Cardano platform, with annotated code excerpts and with source code attached.
We get a clearer view of the Cardano programming model in particular by introducing a novel mathematical abstraction which we call Idealised EUTxO.
For each version of the contract, we trace how the architectures of the underlying platforms and their mathematics affects the natural programming styles and natural classes of errors.  We prove some simple but novel results about alpha-conversion and observational equivalence for Cardano, and explain why Ethereum does not have them.
We conclude with a wide-ranging and detailed discussion in the light of the examples, mathematical model, and mathematical results so far.
\end{abstract}
\thispagestyle{empty}

\section{Introduction}

In the context of blockchain and cryptocurrencies, smart contracts are a way to make the blockchain programmable.
That is: a smart contract is a program that runs on the blockchain to extend its capabilities.

For the smart contract, the blockchain is just an abstract machine (database, if we prefer) with which it programmatically interacts.
Basic design choices in the blockchain's design can affect the the smart contract programming paradigm which it naturally supports, and this can have far-reaching consequences: different programming paradigms are susceptible to different programming styles, and different kinds of program errors.

Thus, a decision in the blockchain's design can have lasting, unavoidable, and critical effects on its programmability.
It is worth being very aware of how this can play out, not least because almost by definition, applications of smart-contracts-the-programming-paradigm tend to be safety-critical.

In this paper we will consider a simple but illustrative example of a smart contract: a fungible tradable token issued by an issuer who creates an initial supply and then retains control of its price---imitating a government-issued fiat currency, but run from a blockchain instead of a central bank.

We will put this example in the context of two major smart contract languages: Solidity, which runs on the Ethereum blockchain, whose native token is \emph{ether}; and Plutus, which runs on the Cardano blockchain, whose native token is \emph{ada}.\footnote{Plutus and Cardano are IOHK designs.  The CEO and co-founder of IOHK, Charles Hoskinson, was also one of the co-founders of Ethereum.}
We compare and contrast the blockchains in detail and exhibit their respective smart contracts.
Both contracts run, but their constructions are different, in ways that illuminate the essential natures of the respective underlying blockchains and the programming styles that they support.

We will also see that the Ethereum smart contract is arguably buggy,
in a way that flows from the underlying programming paradigm of Solidity/Ethereum.
So even in a simple example, the essential natures of the underlying systems are felt, and with a mission-critical impact.

\begin{defn}
\label{defn.the.spec}
We will use Solidity and Plutus to code a system as follows:
\begin{enumerate*}
\item
An issuer $\mathtt{Issuer}$ creates some initial supply of a tradable token on the blockchain at some location in $\mathtt{Issuer}$'s control; call this the \emph{official portal}.
\item
Other parties buy the token from the official portal at a per-token ether/ada \emph{official price}, controlled by the issuer.\footnote{Think: central bank, manufacturer's price, official exchange rate, etc.}

Once other parties get some token, they can trade it amongst themselves (e.g. for ether/ada), independently of the official portal and on whatever terms and at whatever price they mutually agree.
\item\label{set.price}
The issuer can update the ether/ada official price of the token on the official portal, at any time.
\item\label{initial.supply}
For simplicity, we permit just one initial issuance of the token,
though tokens can be redistributed as just described.
\end{enumerate*}
\end{defn}

\section{Idealised EUTxO}
\label{sect.idealised.cardano}

\subsection{The structure of an idealised blockchain}

We start with a novel mathematical idealisation of the EUTxO (Extended UTxO) model on which Cardano is based~\cite{chakraverty:extum,coutts:forscw}.

\begin{nttn}
\label{nttn.pointed}
Suppose $X$ and $Y$ are sets.
Then:
\begin{enumerate*}
\item
Write $\mathbb N=\{0,1,2,\dots\}$ and $\mathbb N_{>0}=\{1,2,3,\dots\}$.
\item\label{pointed.finite.set}
Write $\finpow(X)$ for the finite powerset of $X$ and $\finpow_!(X)$ for the pointed finite powerset
(the $(X',x)\in \finpow(X)\times X$ such that $x\in X'$).\footnote{This use of `pointed' is unrelated to the `points to' of Notation~\ref{nttn.points.to}.}
\item
Write $\powerset(X)$ for the powerset of $X$, and
$X\finto Y$ for finite maps from $X$ to $Y$ (finite partial functions).
\end{enumerate*}
\end{nttn}

\begin{defn}
Let the \deffont{types} of \deffont{Idealised EUTxO} be a solution to the 
equations in Figure~\ref{fig.ieutxo.types}.\footnote{We write `a solution of' because Figure~\ref{fig.ieutxo.types} does not specify a unique subset for $\tf{Validator}$.
\emph{Computable} subsets is one candidate, but our mathematical abstraction is agnostic to this choice.  This is just the same as function-types not being modelled by \emph{all} functions, or indeed as models of set theory can have different powersets (e.g. depending on whether a powerset includes the Axiom of Choice).}
For diagrams and examples see Example~\ref{xmpl.example.transactions}.
\end{defn}

\begin{figure}[tp]
$$
\begin{array}{r@{\ }l}
\tf{Redeemer}=&\tf{CurrencySymbol}=\tf{TokenName}=\tf{Position}=\mathbb N
\\
\tf{Chip}=&\tf{CurrencySymbol}\times\tf{TokenName}
\\
\tf{Datum}\times\tf{Value}=&\mathbb N\times(\tf{Chip}\finto\mathbb N_{>0})
\\
\tf{Validator}\subseteq&\powerset(\tf{Redeemer}\times\tf{Datum}\times\tf{Value}\times\tf{Context})
\\
\tf{Input}=&\tf{Position}\times\tf{Redeemer}
\\
\tf{Output}=&\tf{Position}\times\tf{Validator}\times\tf{Datum}\times\tf{Value}
\\
\tf{Transaction}=&\finpow(\tf{Input})\times\finpow(\tf{Output})
\\
\tf{Context}=&\finpow_!(\tf{Input})\times\finpow(\tf{Output})
\end{array}
$$
\caption{Types for Idealised EUTxO}
\label{fig.ieutxo.types}
\ \\
\lstinputlisting[language=hask]{plutus-include.txt}
\caption{Plutus implementation of the tradable token}
\label{fig.plutus.code}
\end{figure}

\begin{rmrk}
\label{rmrk.think.of}
\begin{enumerate}
\item
Think of $r\in\tf{Redeemer}$ as a key, required as a necessary condition by a validator (below) to permit computation.
\item\label{what.is.a.chip}
A chip $c=(d,n)$ is intuitively a currency unit (\textsterling, \textdollar, \dots), where $d\in\tf{CurrencySymbol}$ is assumed to G\"odel encode\footnote{G\"odel encoding refers to the idea of enumerating a countable datatype (in some arbitrary way) so that each element is represented by a unique numerical index.}
some predicate defining a \deffont{monetary policy} (more on this in Remark~\ref{rmrk.messy.reindexing}(\ref{monetary.policy}))
and $n\in\tf{TokenName}$ is just a symbol (Ada has special status and is encoded as $(0,0)$).
\item
$\tf{Datum}$ is any data; we set it to be $\mathbb N$ for simplicity.
A value $v\in\tf{Value}$ is intuitively a multiset of tokens; $v\,c$ is the number of $c$s in $v$.
\begin{itemize*}
\item
We may abuse notation and define $v\,c=0$ if $c\not\in\f{dom}(v)$.
\item
If $\f{dom}(v)=\{c\}$
then we may call $v$ a \deffont{singleton} value (note $v\,c$ need not equal~1).
See line~10 of Figure~\ref{fig.plutus.code} for the corresponding code.
\end{itemize*}
\item
A transaction is a set of inputs and a set of outputs (either of which may be empty).
In a blockchain these are subject to consistency conditions (Definition~\ref{defn.blockchain}); for now we just
say its inputs `consume' some previous outputs, and `generate' new outputs.
A context is just a transaction viewed from a particular input (see next item).
\item
Mathematically a validator $V\in\tf{Validator}$ is the set of $\tf{Redeemer}\times\tf{Datum}\times\tf{Value}\times\tf{Transaction}$
tuples it validates
\emph{but} in the implementation we intend that $V$ is represented by code $\mathtt{V}$ such that
\begin{itemize*}
\item
from $\mathtt{V}$ we cannot efficiently compute a tuple $t$ such that $\mathtt{V}(t){=}\mathtt{True}$, and
\item
from $\mathtt{V}$ and $t$, we can efficiently check if $\mathtt{V}(t){=}\mathtt{True}$.\footnote{The `crypto' in `cryptocurrency' lives here.}
\end{itemize*}
We use a \emph{pointed} transaction (Notation~\ref{nttn.pointed}(\ref{pointed.finite.set}))
because a $\tf{Validator}$ in an $\tf{Output}$ in a $\tf{Transaction}$ is always invoked by some particular $\tf{Input}$ in a later $\tf{Transaction}$ (see Definition~\ref{defn.blockchain}(\ref{blockchain.in.out}\&\ref{blockchain.the.point})), and that invoking input is identified by using $\tf{Context}$.
If $t\in V$ then we say $V$ \deffont{validates} $t$.
\end{enumerate}
\end{rmrk}

\begin{nttn}
\label{nttn.points.to}
\begin{enumerate*}
\item
If $\tx=(I,O)\in\tf{Transaction}$ and $o\in\tf{Output}$, say $o$ \deffont{appears in} $\tx$ and write $o\in \tx$ when $o\in O$; similarly for an input $i\in\tf{Input}$.
We may silently extend this notation to larger data structures, writing for example $o\in\txs$ (Definition~\ref{defn.cong}(\ref{spent.output})).
\item
If $i$ and $o$ have the same position then say that $i$ \deffont{points to} $o$.
\item\label{points.to}
If $\tx=(I,O)\in\tf{Transaction}$ and $i\in I$ then write $\tx\at i$ for the context $((I,i),O)$ obtained by pointing $I$ at $i\in I$.
\end{enumerate*}
\end{nttn}

\begin{defn}
\label{defn.blockchain}
A \deffont{valid blockchain}, or just \deffont{blockchain}, of idealised EUTxO is a finite sequence of transactions $\txs$ such that:
\begin{enumerate*}
\item
Distinct outputs appearing in $\txs$ have distinct positions.
\item\label{blockchain.in.out}
Every input $i$ in some $\tx$ in $\txs$ points to a unique output in some earlier transaction --- it follows from this and condition~1 of this Definition, that distinct inputs appearing in $\txs$ also have distinct positions.

Write this unique output $\txs(i)$.
\item\label{blockchain.the.point}
If $i=(p,k)$ appears in $\tx$ in $\txs$ and points to an earlier output $\txs(i)=(p,V,s,v)$, then $(k,s,\tx\at i)\in V$ ($\at$ from Notation~\ref{nttn.points.to}(\ref{points.to})).
\end{enumerate*}
\end{defn}

\newcommand\nameOaa{$a$}
\newcommand\nameOab{$b$}
\newcommand\nameOac{$c$}
\newcommand\nameIba{\nameOab}
\newcommand\nameObb{$d$}
\newcommand\nameIca{\nameOaa}
\newcommand\nameOca{$e$}
\newcommand\nameOcb{$f$}
\newcommand\nameOcc{$g$}
\newcommand\nameIda{\nameOca}
\newcommand\nameIdb{\nameOcb}
\newcommand\nameIdc{\nameObb}
\newcommand\nameOda{$h$}
\newcommand\nameOdb{$i$}
\newcommand\nameOdc{$j$}
\newcommand\nameOdd{$k$}

\newcommand\drawCircle[2]{
    \path[fill, #1]    (#2) circle[radius=0.3];
    \path[fill, white] (#2) circle[radius=0.15];
}
\newcommand\drawBox[2]{
        \draw[ultra thick, fill=lightgray]
            ($#1 + (-1,-1)$)
            rectangle ($#1 + (1,1)$);
        \node at #1 {$#2$};
}
\newcommand\drawLine[4]{
        \draw[ultra thick] (#1) to[out=0, in=180] node[#2]{#3} (#4);
}
\newcommand\drawLineToCircle[5]{
        \drawLine{#1}{#2}{#3}{#4}
        \drawCircle{#5}{#4}
}
\newcommand\drawLineFromCircle[5]{
        \drawLine{#1}{#2}{#3}{#4}
        \drawCircle{#5}{#1}
}

\newcommand\mkBackbone{
        \coordinate (a) at (0*\ma,0);    
        \coordinate (d) at (3*\ma,0);    

        \coordinate (ab1) at (\mb, \mc);  
        \coordinate (ab2) at (\mb, 0);  
        \coordinate (ab3) at (\mb,-\mc);  

        \coordinate (bc1) at (1.5*\ma,0.5*\mc); 

        \coordinate (de1) at (3.5*\ma, \mc);
        \coordinate (de2) at (3.5*\ma, 1/3*\mc);
        \coordinate (de3) at (3.5*\ma,-1/3*\mc);
        \coordinate (de4) at (3.5*\ma, -\mc);
        
        \coordinate (x2) at (2.5*\ma,-\mc);

        \drawBox{(a)}{\tx_1}
        \drawBox{(b)}{\tx_2}
        \drawBox{(c)}{\tx_3}
        \drawBox{(d)}{\tx_4}
        
        \drawLine{$(a) + (1, 0.5)$}{above, xshift=-1, yshift=1}{\nameOaa}{ab1}
        \drawLineToCircle{$(d) + (1, 0.6)$}{above}{\nameOda}{de1}{red}
        \drawLineToCircle{$(d) + (1, 0.2)$}{above right}{\nameOdb}{de2}{red}
        \drawLineToCircle{$(d) + (1,-0.2)$}{below right}{\nameOdc}{de3}{red}
        \drawLineToCircle{$(d) + (1,-0.6)$}{below}{\nameOdd}{de4}{red}
        \drawLine{$(c) + (1, 0.5)$}{above, xshift=-1}{\nameOca}{cd1}
        \drawLineToCircle{$(c) + (1,-0.5)$}{below, xshift=-1}{\nameOcc}{cd3}{red}

}
\newcommand\mkBackboneBC[1]{
        \coordinate (bc2) at (1.5*\ma-#1*\ma,-\mc);    
        \coordinate (bc3) at (1.5*\ma, 0);

        \coordinate (cd4) at (2.5*\ma+#1*\ma,-\mc);
        
        \coordinate (cd1) at (2.5*\ma, \mc);  
        \coordinate (cd2) at (2.5*\ma, 0);    
        \coordinate (cd3) at (2.5*\ma-#1*\ma,-2/3*\mc+#1*\mc);    

        \coordinate (b) at (1*\ma,0);    
        \coordinate (c) at (2*\ma,0);    
        \coordinate (x1) at (1.5*\ma+#1*\ma,\mc);     

        \mkBackbone

        \drawLineToCircle{$(a) + (1,-0.5)$}{below, xshift=-1, yshift=-1}{\nameOac}{ab3}{red}

        \drawLine{ab2}{above, yshift=-1}{\nameIba}{$(b) + (-1, 0)$}
        \drawLineToCircle{$(a) + (1, 0  )$}{above, yshift=-1}{\nameOab}{ab2}{blue}
   
        \drawLine{$(c) + (1, 0  )$}{above, yshift=-1.5}{\nameOcb}{cd2};

}
\newcommand\mkBackboneCB[1]{
        \coordinate (bc2) at (1.5*\ma,-\mc);    

        \coordinate (bd1) at (2.5*\ma, \mc);
        \coordinate (bd2) at (2.5*\ma,-\mc);
        \coordinate (bd3) at (2.5*\ma, 3/5*\mc);
        \coordinate (bd4) at (2.5*\ma,-1/3*\mc);
 
        \coordinate (cd1) at (1.5*\ma-#1*\ma, \mc);
        \coordinate (cd2) at (1.5*\ma-#1*\ma, 3/5*\mc);
        \coordinate (cd3) at (1.5*\ma-#1*\ma,-2/5*\mc);
        
        \coordinate (c) at (1*\ma,0);       
        \coordinate (b) at (2*\ma,0);       
        \coordinate (x1) at (1.5*\ma,\mc);  

        \mkBackbone

        \drawLineToCircle{$(a) + (1,-0.5)$}{below, xshift=-1, yshift=-1}{\nameOab}{ab3}{red}
        
        \drawLineToCircle{$(a) + (1, 0  )$}{above, yshift=-1}{\nameOac}{ab2}{red}

        \drawLine{$(c) + (1, 0  )$}{below, yshift=1}{\nameOcb}{cd2}

        \coordinate (cb1) at (1.5*\ma+#1*\ma, -1*\mc);
        \drawLine{cb1}{above, xshift=-2}{\nameIba}{$(b) + (-1, 0)$}
}

\begin{figure}[tph]
    \centering
    \newcommand\ma{10}
    \newcommand\mb{5}
    \newcommand\mc{3}
    \begin{tikzpicture}[scale=0.3]
        \coordinate (tx) at (  0, 0);
        \coordinate (i1) at (-\ma, \mb);
        \coordinate (i2) at (-\ma, 0);
        \coordinate (i3) at (-\ma,-\mb);
        \coordinate (o1) at ( \ma, \mc);
        \coordinate (o2) at ( \ma,-\mc);

        \drawBox{(tx)}{tx}

        \drawLineFromCircle{i1}{above right}{$(a,r_1)$}{$(tx) + (-1, 0.5)$}{red}
        \drawLineFromCircle{i2}{above, xshift=-5}{$(b,r_2)$}{$(tx) + (-1, 0  )$}{red}
        \drawLineFromCircle{i3}{below right}{$(c,r_3)$}{$(tx) + (-1,-0.5)$}{red}
        \drawLineToCircle{$(tx) + (1, 0.3)$}{below right, xshift=2, yshift=5}{$(d,V_1,d_1,v_1)$}{o1}{red}
        \drawLineToCircle{$(tx) + (1,-0.3)$}{above right, xshift=2, yshift=-5}{$(e,V_2,d_2,v_2)$}{o2}{red}

    \end{tikzpicture}
    \caption{A transaction $tx$ with three inputs and two outputs, positions $a,b,c,d,e$, redeemers $r_i$, validators $V_j$,
        data $d_j$ and values $v_j$}
    \label{fig.transaction}
    \ \\[5mm]
    \renewcommand\ma{10}
    \renewcommand\mb{5}
    \renewcommand\mc{3}
    \begin{tikzpicture}[scale=0.3]

        \mkBackboneBC{0}

        \drawLine{bc2}{}{}{x2}
        \drawLineToCircle{$(b) + (1,-0.25)$}{below, xshift=-1, yshift=-1}{\nameObb}{bc2}{blue}

        \drawLineFromCircle{ab1}{}{}{x1}{blue}
        \drawLine{x1}{above, xshift=1, yshift=1}{\nameIca}{$(c) + (-1, 0.25)$}

        \drawLineFromCircle{cd1}{above, xshift=1}{\nameIda}{$(d) + (-1, 0.5)$}{blue}
        \drawLineFromCircle{cd2}{above, xshift=-2, yshift=-1}{\nameIdb}{$(d) + (-1, 0  )$}{blue}
        \drawLine{x2}{below}{\nameIdc}{$(d) + (-1,-0.5)$}

    \end{tikzpicture}
    \caption{A blockchain $\mathcal B=[\tx_1,\tx_2,\tx_3,\tx_4]$}
    \label{fig.blockchain}
    \ \\[5mm]
    \begin{tikzpicture}[scale=0.3]
        \mkBackboneBC{0.05}

        \drawLineToCircle{$(b) + (1,-0.25)$}{below, xshift=-1, yshift=-1}{\nameObb}{bc2}{red}
        \drawLineFromCircle{x1}{above, xshift=1, yshift=1}{\nameIca}{$(c) + (-1, 0)$}{red}
        \drawLineFromCircle{cd1}{above, xshift=1, yshift=0}{\nameIda}{$(d) + (-1, 0.5)$}{blue}
        \drawLineFromCircle{cd2}{above,xshift=-1, yshift=-2}{\nameIdb}{$(d) + (-1, 0  )$}{blue}
        \drawLineFromCircle{cd4}{below}{\nameIdc}{$(d) + (-1,-0.5)$}{red}

        \drawCircle{red}{ab1}

    \end{tikzpicture}
    \caption{$B$ chopped up as a blockchain $[\tx_1,\tx_2]$ and a chunk $[\tx_3,\tx_4]$}
    \label{fig.chunks1}
    \ \\[5mm]
    \begin{tikzpicture}[scale=0.3]
        \mkBackboneCB{0.05}

        \drawLine{bd2}{below, xshift=1}{\nameIdc}{$(d) + (-1,-0.5)$}
        \drawLineToCircle{$(b) + (1,-0.25)$}{below, xshift=-1, yshift=-1}{\nameObb}{bd2}{blue}
        \drawLineFromCircle{ab1}{above, xshift=1, yshift=1}{\nameIca}{$(c) + (-1, 0.5)$}{blue}

        \drawLineFromCircle{bd1}{above, xshift=1}{\nameIda}{$(d) + (-1, 0.5)$}{red}
        \drawLineFromCircle{bd3}{below, yshift=0}{\nameIdb}{$(d) + (-1, 0  )$}{red}

        \drawCircle{red}{cb1}
        \drawCircle{red}{cd1}
        \drawCircle{red}{cd2}
    \end{tikzpicture}
    \caption{$B$ chopped up as a blockchain $[\tx_1,\tx_3]$ and a chunk $[\tx_2,\tx_4]$}
    \label{fig.chunks2}
    \ \\[5mm]
    \begin{tikzpicture}[scale=0.3]
        \mkBackboneCB{0}

        \drawLineFromCircle{ab1}{above, xshift=1, yshift=1}{\nameIca}{$(c) + (-1, 0.5)$}{blue}
        \drawLine{bd1}{above, xshift=1}{\nameIda}{$(d) + (-1, 0.5)$}
        \drawLine{bd3}{below, yshift=0}{\nameIdb}{$(d) + (-1, 0  )$}
        \drawLine{bd2}{below, xshift=1}{\nameIdc}{$(d) + (-1,-0.5)$}
        \drawLineToCircle{$(b) + (1,-0.25)$}{below, xshift=-1, yshift=-1}{\nameObb}{bd2}{blue}

        \drawLineFromCircle{ab3}{}{}{cb1}{blue}
        \drawLineFromCircle{cd1}{}{}{bd1}{blue}
        \drawLineFromCircle{cd2}{}{}{bd3}{blue}

    \end{tikzpicture}
    \caption{The blockchain $\mathcal B'=[\tx_1,\tx_3,\tx_2,\tx_4]$}
    \label{fig.blockchain'}
\end{figure}

\begin{xmpl}
\label{xmpl.example.transactions}
Example transactions, blockchains, and chunks are illustrated in Figures~\ref{fig.transaction}, \ref{fig.blockchain}, \ref{fig.chunks1}, \ref{fig.chunks2}, and~\ref{fig.blockchain'}.

We leave it as an exercise to the reader to verify that: $\mathcal B$, $\mathcal B'$, $[\tx_1,\tx_2]$ and $[\tx_1,\tx_3]$ are blockchains;\ and $[\tx]$, $[\tx_3,\tx_4]$ and $[\tx_2,\tx_4]$ are chunks but not blockchains.
Also, e.g. $[\tx_2,\tx_1]$ is neither blockchain nor chunk, though it is a list of transactions, because the \nameIba-input of $\tx_2$ points to the later \nameOab-output of $\tx_1$.
\end{xmpl}

\begin{nttn}
\label{nttn.valid}
$\txs$ will range over finite sequences of transactions, and $\mathcal B$ will range over blockchains.
We write $\f{valid}(\txs)$ for the assertion ``$\txs$ is a blockchain''.
\end{nttn}

\begin{rmrk}
\label{rmrk.messy.reindexing}
In words:
\emph{A sequence of transactions is a blockchain when every input points to an earlier validating output.}
Note that output and inputs of a transaction may be empty, and this matters because any blockchain must contain a so-called \emph{genesis block}, which is a transaction without inputs.
We call Figure~\ref{fig.ieutxo.types} and Definition~\ref{defn.blockchain} \emph{Idealised EUTxO} because:
\begin{enumerate*}
\item
The implementation has bells and whistles which we omit for simplicity:
\begin{enumerate*}
\item
There is a second type of output for making payments to people.
\item
Values represent `real money' and so are preserved---the sum of input values must be equal to the sum of output values---unless they are not preserved, e.g. due to fees (which reduce the sum of outputs) or forging (creating) tokens (which increase it).
\item\label{slot.ranges}
Transactions can be made time-sensitive using \emph{slot ranges} (Remark~\ref{rmrk.slot.ranges}); a transaction can only be accepted into a block whose slot is inside the transaction's slot range.
\end{enumerate*}
All of the above is important for a working blockchain but for our purposes it would just add complexity.
\item\label{monetary.policy}
We continue the discussion in Remark~\ref{rmrk.think.of}(\ref{what.is.a.chip}).
If we consider a chip $c=(d,n)$, the currency symbol $d$ is not aritrary: it is (a G\"odel encoding of) a \emph{monetary policy} predicate.
In the implementation, only transactions which satisfy all pertinent monetary policies are admissible.

This is relevant to our example in Section~\ref{sect.plutus.implementation} because we will assume a \emph{state chip} with a monetary policy which enforces that it is affine (zero or one chips on the blockchain; see Remark~\ref{rmrk.state.chip}).
Explaining the mechanics of how this works is outside the scope of this paper; here we just note it exists.
\item\label{list.of.outputs}
In the implementation, a transaction is a pair of a set of inputs and a \emph{list} of outputs.
This is because an implementation concerns a run on a particular concrete blockchain $\mathcal B$, and we want to assign concrete positions for outputs in $\mathcal B$; so with a list the output located at the $j$th output of $i$th transaction could get position $2^i3^j$.

However, we care here about the theory of blockchains (plural).
It is better to use sets and leave positions abstract, since if positions are fixed by their location in a blockchain then in Theorem~\ref{thrm.might.as.well} and the lemmas leading up to it, when we rearrange transactions in a blockchain (e.g. proving an observational equivalence) we could have to track an explicit reindexing of positions in the statement of the results.
More on this in Subsection~\ref{subsect.name-carrying}.
\end{enumerate*}
\end{rmrk}

\subsection{UTxOs and observational equivalence}
\label{subsect.observational.equivalence}

We will be most interested in Definition~\ref{defn.cong} when $\txs$ and $\txs'$ are blockchains $\mathcal B$ and $\mathcal B'$.
However, it is convenient for Lemma~\ref{lemm.apart.commute}(\ref{apart.commute}) if we consider the more general case of any finite sequences of transactions $\txs$ and $\txs'$:
\begin{defn}
\label{defn.cong}
\begin{enumerate*}
\item\label{spent.output}
Call an output $o\in\txs$ \deffont{spent} (in $\txs$) when a later input points to it, and otherwise call $o$ \deffont{unspent} (in $\txs$).
\item
Write $\utxo(\txs)$ for the set of unspent outputs in $\txs$.
\item
If $\utxo(\txs)=\utxo(\txs')$ then write $\txs\approx\txs'$ and call $\txs$ and $\txs'$ \deffont{observationally equivalent}.
\end{enumerate*}
\end{defn}

\begin{nttn}
\label{nttn.append}
Given $\txs$ and a $\tx$, write $\txs;\tx$ for the sequence of transactions obtained by appending $\tx$ to $\txs$.
We will mostly care about this when $\txs$ and $\txs;\tx$ are blockchains, and if so this will be clear from context.
\end{nttn}

\begin{lemm}
Validity (Definition~\ref{defn.blockchain})
is closed under initial subsequences, but not necessarily under final subsequences:
\begin{enumerate*}
\item
$\valid(\txs;\tx)$ implies $\valid(\txs)$.
\item
$\valid(\tx;\txs)$ does not necessarily imply $\valid(\txs)$.
\end{enumerate*}
\end{lemm}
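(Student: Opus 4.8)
The plan is to handle the two parts separately: part~1 by showing that each of the three clauses of Definition~\ref{defn.blockchain} descends from $\txs;\tx$ to its initial subsequence $\txs$, and part~2 by exhibiting a small counterexample (or simply reusing the running example).

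For part~1, assume $\valid(\txs;\tx)$ and check the three conditions of Definition~\ref{defn.blockchain} for $\txs$. Condition~1 is immediate: the outputs appearing in $\txs$ form a subset of those appearing in $\txs;\tx$, so they still have pairwise distinct positions. For condition~2, take an input $i$ occurring in some transaction $\tx'$ of $\txs$; then $i$ also occurs in $\txs;\tx$, so by validity of the latter it points to a unique output sitting in a transaction strictly before $\tx'$; since $\tx$ is appended \emph{after} every transaction of $\txs$, that earlier transaction already lies in $\txs$, so the witnessing output is present in $\txs$ and is still the unique one. Condition~3 is the only clause that deserves a sentence of care: the membership test $(k,s,\tx'\at i)\in V$ from Definition~\ref{defn.blockchain}(\ref{blockchain.the.point}) refers only to the transaction $\tx'$ containing $i$, via the context $\tx'\at i$ of Notation~\ref{nttn.points.to}(\ref{points.to}), and not to the rest of the sequence; hence for each input $i$ of each $\tx'$ in $\txs$ this is \emph{verbatim} the same assertion in $\txs$ as in $\txs;\tx$, and so it holds. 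This completes part~1.

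For part~2 it suffices to produce one witness. The quickest is to reuse Figure~\ref{fig.blockchain}: as noted in Example~\ref{xmpl.example.transactions} the sequence $\mathcal B=[\tx_1,\tx_2,\tx_3,\tx_4]$ is a blockchain, so taking $\tx=\tx_1$ and $\txs=[\tx_2,\tx_3,\tx_4]$ we have $\valid(\tx;\txs)$; but $[\tx_2,\tx_3,\tx_4]$ is not a blockchain, because the input of $\tx_2$ points to an output of $\tx_1$, which no longer appears, so condition~2 of Definition~\ref{defn.blockchain} fails. A self-contained alternative is the two-transaction example in which $\tx$ is a genesis transaction (no inputs) whose sole output $(p,V,s,v)$ is consumed by the sole input $(p,k)$ of a transaction $\tx_0$ satisfying $(k,s,\tx_0\at(p,k))\in V$: then $[\tx,\tx_0]$ is a blockchain while $[\tx_0]$ is not.

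There is no genuine obstacle here: the lemma is bookkeeping against Definition~\ref{defn.blockchain}. The one conceptual point, already flagged above, is that a validator is invoked with a \emph{context} --- the consuming transaction pointed at its consuming input --- rather than with the whole blockchain, so the validation clause is insensitive to prefixes; dually, the only way deleting an initial segment can break validity is by orphaning an input (condition~2), which is exactly what happens in the counterexample.
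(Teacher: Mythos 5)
Your proof is correct and follows essentially the same route as the paper's: part~1 by observing that each clause of Definition~\ref{defn.blockchain} survives deletion of the final transaction, and part~2 by noting that an input of $\txs$ may point to an output of the deleted initial transaction $\tx$, violating condition~2. You simply spell out the condition-by-condition check and supply an explicit counterexample where the paper is content with a one-line description of the failure mode; both are fine.
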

\begin{proof}
For part~1: removing $\tx$ cannot invalidate the conditions in Definition~\ref{defn.blockchain}.
For part~2: it may be that an input in $\txs$ points to an output in $\tx$; if we then remove $\tx$ we would violate condition~\ref{blockchain.in.out} of Definition~\ref{defn.blockchain} that every input must point to a previous output.
\end{proof}

A special case of interest is when two transactions operate on non-overlapping parts of the preceding blockchain:
\begin{nttn}
Suppose $\tx$ and $\tx'$ are transactions.
Write $\tx\#\tx'$ when the positions mentioned in the inputs and outputs of $\tx$, are disjoint from those mentioned in $\tx'$, and in this case call $\tx$ and $\tx'$ \deffont{apart}.
Similarly for $\tx\#\txs$.
\end{nttn}

\begin{lemm}
$\tx\#\tx'\liff \tx'\#\tx$.
\end{lemm}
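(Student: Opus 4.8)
The plan is to unfold the definition of $\#$ and observe that it reduces to disjointness of two sets of positions, and disjointness is a symmetric relation.

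Concretely, I would first introduce notation $\f{pos}(\tx)$ for the finite set of positions occurring in the inputs and outputs of a transaction $\tx$ — this is precisely the data referred to by the phrase ``the positions mentioned in the inputs and outputs of $\tx$'' in the definition of apartness. By that definition, $\tx\#\tx'$ holds exactly when $\f{pos}(\tx)\cap\f{pos}(\tx')=\emptyset$.

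The one substantive step is then to invoke commutativity of set intersection: $\f{pos}(\tx)\cap\f{pos}(\tx')=\f{pos}(\tx')\cap\f{pos}(\tx)$, so the left-hand set is empty if and only if the right-hand set is, which says exactly $\tx\#\tx' \liff \tx'\#\tx$. The variant $\tx\#\txs \liff \txs\#\tx$ is handled identically, taking $\f{pos}(\txs)$ to be the union of $\f{pos}(\tx_i)$ over the transactions $\tx_i$ appearing in $\txs$.

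There is essentially no obstacle here; the only point requiring (minimal) care is to confirm that the English phrasing ``$A$ is disjoint from $B$'' is read as the symmetric predicate $A\cap B=\emptyset$ rather than as some asymmetric quantified statement — and even under the reading ``every position of $\tx$ fails to be a position of $\tx'$'', symmetry still follows since $(\forall a\in A.\,a\notin B)$ is equivalent to $(\forall b\in B.\,b\notin A)$. Once the definition is pinned down this way, symmetry of $\#$ is immediate from symmetry of $\cap$.
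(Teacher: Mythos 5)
Your proof is correct and matches the paper's intent: the paper dismisses this as ``an easy structural fact,'' and your unfolding of apartness to disjointness of position sets followed by symmetry of disjointness is exactly the obvious argument being alluded to.
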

\begin{proof}
An easy structural fact.
\end{proof}

\begin{rmrk}
In Lemma~\ref{lemm.apart.commute} and Theorem~\ref{thrm.might.as.well} below, note
that:
\begin{itemize*}
\item
$\f{valid}(\mathcal B;\tx)$ (Notations~\ref{nttn.valid} and~\ref{nttn.append}) can be read as the assertion ``it is valid to append the transaction $\tx$ to the blockchain $\mathcal B$''.
\item
$\f{valid}(\mathcal B;\txs)$ can be read as ``it is valid to extend $\mathcal B$ with $\txs$''.
\item
If $\tx\#\tx'$ then they mention disjoint sets of positions, so they cannot point to one another and the UTxOs they point to are guaranteed distinct.
\end{itemize*}
\end{rmrk}

\begin{lemm}
\label{lemm.apart.commute}
\begin{enumerate*}
\item\label{apart.commute}
If $\tx\#\tx'$ then we have $\f{valid}(\mathcal B;\tx;\tx')\liff\f{valid}(\mathcal B;\tx';\tx)$ and $\mathcal B;\tx;\tx'\cong\mathcal B;\tx';\tx$.\footnote{We don't necessarily know $\mathcal B;\tx;\tx'$ is a blockchain, which is why we stated Definition~\ref{defn.cong} for sequences of transactions.}
Intuitively: if two transactions are apart then it is valid to commute them.
\item\label{two.valid.apart}
If $\f{valid}(\mathcal B;\tx';\tx)$ then $\f{valid}(\mathcal B;\tx)\liff\tx\#\tx'$.
(Some real work happens in this technical result, and we use this work to prove Theorem~\ref{thrm.might.as.well}.)
\end{enumerate*}
\end{lemm}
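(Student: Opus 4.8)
The plan is to treat the two parts separately: part~\ref{apart.commute} is essentially bookkeeping, while part~\ref{two.valid.apart} carries the real weight.

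For part~\ref{apart.commute} the key observation is that the context $\tx\at i$ of an input $i$ occurring in a transaction $\tx$ (Notation~\ref{nttn.points.to}(\ref{points.to})) depends only on $\tx$ itself, not on where $\tx$ sits in a sequence; hence the validator check in Definition~\ref{defn.blockchain}(\ref{blockchain.the.point}) is \emph{local} to a transaction. I would then run through the three clauses of Definition~\ref{defn.blockchain}, checking each is unaffected when $\tx$ and $\tx'$ are swapped. Clause~1 speaks only of the set of outputs occurring, which is order-independent. For clause~2, since $\tx\#\tx'$ no input of $\tx$ can point to an output of $\tx'$, and none can point inside $\tx$ itself, so in either order the inputs of $\tx$ point into $\mathcal B$ (symmetrically for $\tx'$), while inputs inside $\mathcal B$ are untouched; thus the ``points to'' partial function is literally the same for $\mathcal B;\tx;\tx'$ and $\mathcal B;\tx';\tx$. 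Clause~3 is local as just noted. This gives $\valid(\mathcal B;\tx;\tx')\liff\valid(\mathcal B;\tx';\tx)$. For the equivalence $\mathcal B;\tx;\tx'\approx\mathcal B;\tx';\tx$ (Definition~\ref{defn.cong}), which does not even need validity since $\utxo$ is defined on arbitrary sequences: the two sequences contain the same outputs, and each output has the same spent/unspent status in both --- an output of $\mathcal B$ is seen by the same set of strictly later inputs (those occurring later within $\mathcal B$ together with those of $\tx$ and of $\tx'$), and an output of $\tx$ or of $\tx'$ is unspent either way, because by $\tx\#\tx'$ the other transaction's inputs miss its position and nothing else follows it.

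For part~\ref{two.valid.apart}, assume $\valid(\mathcal B;\tx';\tx)$; then by the preceding lemma also $\valid(\mathcal B;\tx')$ and $\valid(\mathcal B)$. The ``$\Leftarrow$'' direction is short: if $\tx\#\tx'$ then $\tx'\#\tx$, so part~\ref{apart.commute} turns $\valid(\mathcal B;\tx';\tx)$ into $\valid(\mathcal B;\tx;\tx')$, and stripping $\tx'$ off the end gives $\valid(\mathcal B;\tx)$. The ``$\Rightarrow$'' direction is the substantial one. Assuming also $\valid(\mathcal B;\tx)$, I would argue by contradiction from a position $p$ mentioned by both $\tx$ and $\tx'$. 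The engine is two facts squeezed out of clauses~1--2 of Definition~\ref{defn.blockchain}: (i) the outputs created by $\tx$ (resp.\ by $\tx'$) occupy positions disjoint from those of the outputs already in $\mathcal B$ --- otherwise $\mathcal B;\tx$ (resp.\ $\mathcal B;\tx'$) would carry two distinct outputs at one position; and (ii) in $\mathcal B;\tx';\tx$ every input of $\tx'$ must point into $\mathcal B$ (nothing precedes $\tx'$ but $\mathcal B$), and --- the real point --- every input of $\tx$ must \emph{still} point into $\mathcal B$ rather than into $\tx'$, because by $\valid(\mathcal B;\tx)$ that input demands an output at a position lying in $\mathcal B$, and by~(i) $\tx'$ created no output there. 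Consequently every input position of $\tx$ and of $\tx'$ is an output position of $\mathcal B$; by~(i) these are disjoint from the output positions of $\tx$ and of $\tx'$; the input positions of $\tx$ and of $\tx'$ are mutually disjoint since distinct inputs in the blockchain $\mathcal B;\tx';\tx$ have distinct positions; and the output positions of $\tx$ and of $\tx'$ are disjoint by clause~1 applied to $\mathcal B;\tx';\tx$. So $p$ cannot arise in any of the four ways, a contradiction, whence $\tx\#\tx'$.

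I expect the main obstacle to be exactly fact~(ii): ruling out that interposing $\tx'$ between $\mathcal B$ and $\tx$ silently ``re-routes'' an input of $\tx$ onto a freshly created output of $\tx'$. This is where the argument leans hardest on clause~1 of Definition~\ref{defn.blockchain} --- that outputs receive fresh positions --- in combination with both validity hypotheses, and it is presumably the ``real work'' the statement advertises.
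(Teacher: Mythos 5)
Your proof is correct and follows essentially the same route as the paper's: part~\ref{apart.commute} by routine inspection of the clauses of Definition~\ref{defn.blockchain}, and the forward direction of part~\ref{two.valid.apart} by a case analysis on how a shared position $p$ could occur in $\tx$ and $\tx'$, deriving a contradiction with $\valid(\mathcal B;\tx';\tx)$ or with $\valid(\mathcal B;\tx)$ in each case. The only (immaterial) difference is in the converse direction, where you commute via part~\ref{apart.commute} and strip the final transaction, while the paper argues directly that removing $\tx'$ cannot disconnect the inputs of $\tx$, which point only into $\mathcal B$.
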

\begin{proof}
\begin{enumerate*}
\item
By routine checking Definition~\ref{defn.blockchain}.
\item
Suppose $\neg(\tx\#\tx')$, so some position $p$ is mentioned by $\tx$ and $\tx'$.
If $p$ is in an input in both $\tx$ and $\tx'$, or an output in both, then $\f{valid}(\mathcal B;\tx';\tx)$ is impossible because each input must point to a unique earlier output, and each output must have a unique position.
If $p$ is in an input in $\tx$ and an output in $\tx'$ then $\neg\f{valid}(\mathcal B;\tx)$, because now this input points to a nonexistent output.
If $p$ is in an output in $\tx$ and an input in $\tx'$ then $\f{valid}(\mathcal B;\tx';\tx)$ is impossible, since each input must point to an \emph{earlier} output.

Conversely suppose $\tx\#\tx'$.
Then $\tx'$ must point only to outputs in $\mathcal B$ and removing $\tx$ cannot disconnect them and so cannot invalidate the conditions in Definition~\ref{defn.blockchain}.
\end{enumerate*}
\end{proof}

\begin{rmrk}
\label{rmrk.utxo.stateless}
Theorem~\ref{thrm.might.as.well} below gives a sense in which UTxO-based accounting is `stateless'.
With the machinery we now have the proof looks simple, but this belies its significance, which we now unpack in English:

Suppose we submit $\tx$ to some blockchain $\mathcal B$.
Then \emph{either} the submission of $\tx$ fails and is rejected (e.g. if some validator objects to it)---\emph{or} it succeeds and $\tx$ is appended to $\mathcal B$.

If it succeeds, then even if other transactions $\txs$ get appended first---e.g. they were submitted by other actors whose transactions arrived first, so that in-between us creating $\tx$ and the arrival of $\tx$ at the main blockchain, it grew from $\mathcal B$ to $\mathcal B;\txs$---then the result $\mathcal B;\txs;\tx$ is \emph{up to observational equivalence} equivalent to $\mathcal B;\tx;\txs$, which is what \emph{would} have happened \emph{if} our $\tx$ had arrived at $\mathcal B$ instantly and before the competing transactions $\txs$.

In other words: if we submit $\tx$ to the blockchain, then at worst, other actors' actions might prevent $\tx$ from getting appended, however, \emph{if} our transaction gets onto the blockchain somewhere, then we obtain our originally intended result up to observational equivalence.
\end{rmrk}

\begin{thrm}
\label{thrm.might.as.well}
Suppose $\valid(\mathcal B;\txs;\tx)$ and $\valid(\mathcal B;\tx)$.
Then
\begin{enumerate*}
\item
$\valid(\mathcal B;\tx;\txs)$ and
\item
$\mathcal B;\tx;\txs\cong\mathcal B;\txs;\tx$.
\end{enumerate*}
\end{thrm}
\begin{proof}
Using Lemma~\ref{lemm.apart.commute}(\ref{two.valid.apart}) $\tx\#\txs$.
The rest follows using Lemma~\ref{lemm.apart.commute}(\ref{apart.commute}).
\end{proof}

\begin{rmrk}
\label{rmrk.slot.ranges}
The Cardano implementation has \emph{slot ranges} (Remark~\ref{rmrk.messy.reindexing}(\ref{slot.ranges})).
These introduce a notion of time-sensitivity to transactions which breaks Theorem~\ref{thrm.might.as.well}(1), because $\txs$ might be time-sensitive.
If we enrich Idealised EUTxO with slot ranges then a milder form of the result holds which we sketch as Proposition~\ref{prop.slots} below.
\end{rmrk}

\begin{prop}
\label{prop.slots}
If we extend our blockchains with slot ranges, which restrict validity of transactions to defined time intervals, then Theorem~\ref{thrm.might.as.well} weakens to:
\begin{quote}
If $\valid(\mathcal B;\tx;\txs)$ and $\valid(\mathcal B;\txs;\tx)$ then $\mathcal B;\tx;\txs\cong\mathcal B;\txs;\tx$.
\end{quote}
\end{prop}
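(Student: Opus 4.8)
The plan is to notice that the proof of Theorem~\ref{thrm.might.as.well} already isolates the single step that fails once slot ranges are present --- namely the passage from $\valid(\mathcal B;\txs;\tx)$ and $\valid(\mathcal B;\tx)$ to $\valid(\mathcal B;\tx;\txs)$, which can break because some transaction of $\txs$ may be time-sensitive and hence ``object'' to being moved earlier --- and that everything else in that argument uses only the position-based clauses of Definition~\ref{defn.blockchain}. Since Proposition~\ref{prop.slots} merely \emph{hypothesises} $\valid(\mathcal B;\tx;\txs)$ rather than deriving it, the observational-equivalence half of Theorem~\ref{thrm.might.as.well} should go through essentially verbatim.

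First I would record the obvious but load-bearing point that enriching blockchains with slot ranges only \emph{strengthens} validity: a slot-range-valid blockchain still satisfies the three conditions of Definition~\ref{defn.blockchain}, now with the slot windows imposed as extra side conditions. Consequently every earlier result whose proof inspects only output positions and the points-to relation --- closure of validity under initial subsequences, symmetry of $\#$, and both clauses of Lemma~\ref{lemm.apart.commute} --- remains true verbatim in the enriched model, because those proofs never mention slots. I would state this in a sentence and then appeal to these results freely.

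Next, from $\valid(\mathcal B;\tx;\txs)$ I obtain $\valid(\mathcal B;\tx)$ since $\mathcal B;\tx$ is an initial subsequence of $\mathcal B;\tx;\txs$ and validity is closed under initial subsequences. Feeding $\valid(\mathcal B;\txs;\tx)$ and $\valid(\mathcal B;\tx)$ into Lemma~\ref{lemm.apart.commute}(\ref{two.valid.apart}), exactly as in the proof of Theorem~\ref{thrm.might.as.well}, yields $\tx\#\txs$. Finally, with $\tx\#\txs$ in hand the sequences $\mathcal B;\tx;\txs$ and $\mathcal B;\txs;\tx$ contain the same transactions, hence the same outputs; and since $\tx$ and $\txs$ mention disjoint sets of positions, no input of $\tx$ points to an output of $\txs$ or conversely, so for each output the set of strictly-later inputs that point to it is the same in the two orderings. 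Hence an output is unspent in $\mathcal B;\tx;\txs$ iff it is unspent in $\mathcal B;\txs;\tx$, giving $\utxo(\mathcal B;\tx;\txs)=\utxo(\mathcal B;\txs;\tx)$, i.e. $\mathcal B;\tx;\txs\cong\mathcal B;\txs;\tx$ (Definition~\ref{defn.cong}). Equivalently one iterates the $\cong$-clause of Lemma~\ref{lemm.apart.commute}(\ref{apart.commute}), which is sound because $\cong$ is transitive and $\utxo(\txs';\tx')$ is determined by $\utxo(\txs')$ and $\tx'$.

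The only place needing a moment's care --- the ``main obstacle'', such as it is --- is confirming that Lemma~\ref{lemm.apart.commute}(\ref{two.valid.apart}) (together with the routine lifting of it from a single transaction to a sequence $\txs$) genuinely rests on nothing beyond the structural clauses of Definition~\ref{defn.blockchain}, so that it survives the slot-range enrichment intact. Granting that, the proof is as short as the proof of Theorem~\ref{thrm.might.as.well}; the real content of the Proposition is to make explicit that the sole casualty of slot ranges is the ``$\txs$ stays valid when moved earlier'' direction, and not the observational equivalence, which is purely a matter of position bookkeeping.
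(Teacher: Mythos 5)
Your proposal is correct and follows essentially the same route as the paper, whose entire proof reads ``As for Theorem~\ref{thrm.might.as.well}, noting that slot ranges are orthogonal to UTxOs, provided the transactions concerned can be appended.'' You have simply unfolded that one sentence: both validities are now hypotheses, the apartness $\tx\#\txs$ and the congruence clause of Lemma~\ref{lemm.apart.commute} depend only on position bookkeeping and so survive the slot-range enrichment, which is exactly the paper's point.
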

\begin{proof}
As for Theorem~\ref{thrm.might.as.well}, noting that slot ranges are orthogonal to UTxOs, provided the transactions concerned can be appended.
\end{proof}

\subsection{$\alpha$-equivalence and more on observational equivalence}
\label{subsect.name-carrying}

Our syntax for $\mathcal B$ in Definition~\ref{defn.blockchain} is \emph{name-carrying}; outputs are identified by unique markers which---while taken from $\mathbb N$; see ``$\tf{Position}{=}\mathbb N$''in Figure~\ref{fig.ieutxo.types}---are clearly used as \emph{atoms} or \emph{names} to identify binding points on the blockchain, to which at most one later input may bind.
Once bound this name can be thought of graphically as an edge from an output to the input that spends it, so clearly the choice of name/position---once it is bound---is irrelevant up to permuting our choices of names.
This is familiar from $\alpha$-equivalence in syntax, where e.g. $\lambda a.\lambda b.ab$ is equivalent to $\lambda b.\lambda a.ba$.
We define:
\begin{defn}
\begin{enumerate*}
\item
Write $\mathcal B\aeq\mathcal B'$ and call $\mathcal B$ and $\mathcal B'$ \deffont{$\alpha$-equivalent} when they differ only in their choice of positions of spent output-input pairs (Definition~\ref{defn.cong}(\ref{spent.output})).\footnote{Positions of unspent outputs (UTxOs) cannot be permuted.  If we permute a UTxO position in $\mathcal B$, we obtain a blockchain $\mathcal B'$ with a symmetric equivalence to $\mathcal B$ but not observationally equivalent to it (much as $\minus i$ relates to $i$ in $\mathbb C$).  More on this in~\cite{gabbay:equzfn}.}
It is a fact that this is an equivalence relation.
\item
If $\Phi$ is an assertion about blockchains, write ``up to $\alpha$-equivalence, $\Phi$'' for the assertion ``there exist $\alpha$-equivalent forms of the arguments of $\Phi$ such that $\Phi$ is true of those arguments''.
\end{enumerate*}
\end{defn}

Lemma~\ref{lemm.utxo.is.local} checks that observational equivalence interacts well with being a valid blockchain and appending transactions.
We sketch its statement and its proof, which is by simple checking.
In words it says: \emph{extensionally, a blockhain up to $\alpha$-equivalence is just its UTxOs}:
\begin{lemm}
\label{lemm.utxo.is.local}
\begin{enumerate*}
\item
$\mathcal B{\cong}\mathcal B'\land\f{valid}(\mathcal B;\tx)\land\f{valid}(\mathcal B';\tx)$ implies $\mathcal B;\tx{\cong}\mathcal B';\tx$.
\item
If $\mathcal B\aeq\mathcal B'$ then $\mathcal B\cong\mathcal B'$.
\item\label{up.to.alpha.1}
Up to $\alpha$-equivalence, if $\mathcal B\cong\mathcal B'$ then $\f{valid}(\mathcal B;\tx)\liff\f{valid}(\mathcal B';\tx)$.
\item\label{up.to.alpha.2}
Up to $\alpha$-equivalence, if $\mathcal B\cong\mathcal B'\land\f{valid}(\mathcal B;\tx)$ then $\mathcal B;\tx\cong\mathcal B';\tx$.
\end{enumerate*}
\end{lemm}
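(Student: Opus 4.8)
The plan is to isolate, once and for all, how $\utxo$ behaves under appending a transaction, and then read off all four parts from that. Write $\tx=(I,O)$ and let $P_I$ be the (finite) set of positions occurring in $I$. First I would check directly from Definition~\ref{defn.cong}(\ref{spent.output}) that for \emph{any} finite sequence of transactions $\txs$ one has
$$\utxo(\txs;\tx)\ =\ O\ \cup\ \{\,o\in\utxo(\txs)\ \mid\ \text{the position of }o\text{ does not lie in }P_I\,\},$$
the reason being that every output of $\tx$ is unspent in $\txs;\tx$ since nothing follows $\tx$, while an output of $\txs$ survives into $\utxo(\txs;\tx)$ exactly when it was already unspent in $\txs$ and is not pointed at by an input of $\tx$. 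The payoff is that $\utxo(\txs;\tx)$ depends on $\txs$ only through $\utxo(\txs)$.

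Parts~1 and~2 then drop out. For part~1, substitute $\utxo(\mathcal B)=\utxo(\mathcal B')$ and the common $\tx$ into the displayed formula. For part~2, note that $\mathcal B\aeq\mathcal B'$ changes, by definition, only the positions of \emph{spent} output--input pairs, hence leaves every unspent output---position, validator, datum and value---intact; so $\utxo(\mathcal B)=\utxo(\mathcal B')$, i.e. $\mathcal B\cong\mathcal B'$.

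The substance is part~3, which I expect to be the main obstacle. The plan is to unwind Definition~\ref{defn.blockchain} for $\f{valid}(\mathcal B;\tx)$ assuming $\mathcal B$ is already valid, and to separate the genuinely new requirements: (a) the inputs of $\tx$ have pairwise distinct positions, as do its outputs; (b) every input $i=(p,k)$ of $\tx$ points to an output $(p,V,s,v)\in\utxo(\mathcal B)$ for which the validator clause of Definition~\ref{defn.blockchain}(\ref{blockchain.the.point}) holds; (c) no output-position of $\tx$ occurs as an output-position anywhere in $\mathcal B$. Two observations make the argument go through. The validator clause for $i$ is evaluated against the context $\tx\at i$, which by Notation~\ref{nttn.points.to}(\ref{points.to}) equals $((I,i),O)$ and so is a function of $\tx$ alone, not of the surrounding blockchain; and the ``distinct inputs have distinct positions'' part of Definition~\ref{defn.blockchain}(\ref{blockchain.in.out}) for $\mathcal B;\tx$ follows automatically from (a) and (b), since an input of $\tx$ then hits an \emph{unspent} output of $\mathcal B$, whose position is disjoint from the positions of the spent outputs targeted by $\mathcal B$'s own inputs. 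Consequently (a) and (b) refer only to $\tx$ and to $\utxo(\mathcal B)$, and so are unaffected by replacing $\mathcal B$ with any $\cong$-equivalent blockchain. Only (c) is fragile, and only because $\mathcal B'$ might reuse, for one of its spent outputs, a position mentioned in $\tx$. Since $\tx$ mentions only finitely many positions and $\mathbb N$ is infinite, I would then pass to $\alpha$-equivalent representatives of $\mathcal B$ and of $\mathcal B'$---legitimate, as only spent output--input positions are renamed---whose spent-output positions all avoid the positions occurring in $\tx$; for such representatives (c) says merely that $\tx$'s output-positions avoid the positions of $\utxo(\mathcal B)=\utxo(\mathcal B')$, which again does not distinguish $\mathcal B$ from $\mathcal B'$. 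Hence $\f{valid}(\mathcal B;\tx)\liff\f{valid}(\mathcal B';\tx)$ for these representatives, which is exactly what ``up to $\alpha$-equivalence'' asks for.

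Part~4 is then pure assembly. Given $\mathcal B\cong\mathcal B'$ and $\f{valid}(\mathcal B;\tx)$, I take the $\alpha$-equivalent representatives delivered by part~3; by part~3 they also satisfy $\f{valid}(\mathcal B';\tx)$, by part~2 together with transitivity of $\cong$ (which is literally equality of UTxO-sets) they are still $\cong$-equivalent to each other, and part~1 then yields $\mathcal B;\tx\cong\mathcal B';\tx$. I would also remark that, by the displayed formula, $\mathcal B;\tx\cong\mathcal B';\tx$ in fact holds on the nose once $\mathcal B\cong\mathcal B'$; here the $\alpha$-renaming serves only to guarantee that both $\mathcal B;\tx$ and $\mathcal B';\tx$ are genuine blockchains.
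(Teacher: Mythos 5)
Your proposal is correct and takes essentially the approach the paper intends: the paper gives no explicit proof beyond ``by simple checking,'' and your decomposition via the formula for $\utxo(\txs;\tx)$ together with the observation that only the output-position clash with \emph{spent} outputs of $\mathcal B$ is fragile is exactly the point the paper itself makes in the remark following the lemma (the ``accidental name-clash'' requiring $\alpha$-conversion). Your closing observation---that the congruence in parts 1 and 4 holds on the nose and the renaming is needed only for validity---is a worthwhile sharpening, consistent with the paper's decision to define $\cong$ on arbitrary sequences of transactions.
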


\begin{rmrk}
We need $\alpha$-conversion in cases~\ref{up.to.alpha.1} and~\ref{up.to.alpha.2} of Lemma~\ref{lemm.utxo.is.local} because $\f{valid}(\mathcal B;\tx)$ might fail due to \emph{accidental name-clash} between the position assigned to a spent output in $\mathcal B$, and a position in an unspent output of $\tx$.
We need $\alpha$-conversion to rename the bound position and avoid this name-clash.

This phenomenon is familiar from syntax, e.g. we know to $\alpha$-convert $a$ in $(\lambda a.b)[b\ssm a]$ to obtain (up to $\alpha$-equivalence) $\lambda a'.a$.

There are many approaches to $\alpha$-conversion: graphs, de Bruijn indexes~\cite{bruijn:lamcnn}, name-carrying syntax with an equivalence relation as required (used in this paper), the \emph{nominal abstract syntax} of the second author and others~\cite{gabbay:newaas-jv}, and the type-theoretic approach in~\cite{DBLP:journals/pacmpl/AllaisA0MM18}.
Studying what works best for a structural theory of blockchains is future research.

In this paper we have used raw name-carrying syntax, possibly quotiented by equivalence as above; a more sophisticated development might require more.
Note the implementation solution discussed in Remark~\ref{rmrk.messy.reindexing}(\ref{list.of.outputs}) corresponds to a de Bruijn index approach, and for our needs in this paper, this does \emph{not} solve all problems, as discussed in that Remark (see `messy reindexing').
\end{rmrk}

\section{The Plutus smart contract}
\label{sect.plutus.implementation}

\begin{defn}
Relevant parts of the Plutus code to implement Definition~\ref{defn.the.spec} are in Figure~\ref{fig.plutus.code}.
Full source is at \url{\small https://arxiv.org/src/2003.14271/anc}.
\end{defn}

\begin{rmrk}
\begin{enumerate*}
\item
\inlinehask{Chip} corresponds to $\f{Chip}$ from Figure~\ref{fig.ieutxo.types}.
\item
\inlinehask{Config} stores configuration: the issuer (given by a hash of their public key), the chip traded, and the state chip (see Remark~\ref{rmrk.state.chip}).
\item
\inlinehask{tradedChip} packages up $n$ of \inlinehask{cTradedChip} in a value (think: a roll of quarters).
\item
\inlinehask{Action} is a datatype which stores labels of a transition system, which in our case are either `buy' or `set price'.
\item
\inlinehask{State Integer} corresponds to $\tf{Datum}$ in Figure~\ref{fig.ieutxo.types}.
\inlinehask{stateData s} on line~29 retrieves this datum and uses it as the price of the token.
\item
\inlinehask{value'} (lines~26 and~29) is the amount of ada paid to the Issuer.\footnote{We call it \inlinehask{value} because it directly corresponds with \inlinesolidity{msg.value} in Figure~\ref{fig.solidity.code}, whose name is fixed in Solidity.  We add a dash to avoid name-clash with \inlinehask{value}, an existing function from the Plutus \inlinehask{Ledger} library.}
\end{enumerate*}
\end{rmrk}

\begin{rmrk}[State chip]
\label{rmrk.state.chip}
In Remark~\ref{rmrk.utxo.stateless} we described in what sense Idealised EUTxO (Figure~\ref{fig.ieutxo.types}) is stateless.
Yet Definition~\ref{defn.the.spec}(\ref{set.price}) specifies that Issuer can set a price for the traded chip.
This is seems stateful.
How to reconcile this?

We create a \emph{state chip} \inlinehask{cStateChip}, whose monetary policy (Remark~\ref{rmrk.messy.reindexing}(\ref{monetary.policy}))
enforces that it is \emph{affine} and \emph{monotone increasing}, and thus \emph{linear} once created.
The Issuer issues an initial transaction to the blockchain which sets up our trading system, and creates a unique UTxO that contains this state chip in its value, with the price in its $\tf{Datum}$ field.

The UTxO with the state chip corresponds to the \emph{official portal} from Definition~\ref{defn.the.spec}, and its state datum corresponds to the \emph{official price}.

Monetary policy ensures this is now an invariant of the blockchain as it develops, and anybody can check the current price by looking up the unique UTxO distinguished by carrying precisely on \inlinehask{cStateChip}, and looking at the value in its $\tf{Datum}$ field.\footnote{This technique was developed by the IOHK Plutus team.}
The interested reader can consult the source code.
\end{rmrk}

\begin{rmrk}[Expressivity of off-chain code]
The Plutus contract is a \emph{state machine}, whose transition function \inlinehask{transition} is compiled into a $\tf{Validator}$ function, and so is explicitly \emph{on-chain}.
The function \inlinehask{guarded} lives \emph{off-chain}; e.g. in the user's wallet.
It can construct and send a \inlinehask{Buy}-transaction to (a UTxO with the relevant validator on) the blockchain, after checking that the price is acceptable.

If accepted, the effect of this transaction is independant of concurrent actions of other users, in senses we have made mathematically precise (cf. Remarks~\ref{rmrk.utxo.stateless} and~\ref{rmrk.can.fix}).
This gives Plutus off-chain code a power that Solidity off-chain code cannot attain.
\end{rmrk}

\section{The Solidity smart contract}
\label{sect.solidity.implementation}

\subsection{Description}

\begin{figure}[tph]
$$
\begin{array}{r@{\ }l}
\tf{Sender}=&\tf{FunctionName}=\tf{ContractName}=\tf{Datum}=\tf{Value}=\mathbb N
\\
\tf{Transaction}=&\tf{ContractName}\times\tf{FunctionName}\times\tf{Sender}\times\tf{Value}\times\tf{Datum}
\\
\tf{Contract}=&\tf{ContractName}\times\finpow(\tf{Function})
\\
\tf{Function}\subseteq&\tf{FunctionName}\times \bigl((\tf{Sender}\times\tf{Value}\times\tf{Datum})\to\tf{Blockchain}\to\tf{Blockchain}\bigr)
\\
\tf{Blockchain}=&\tf{Contract}\finto (\tf{Value}\times\tf{Datum})
\end{array}
$$
\caption{Types for Idealised Ethereum}
\label{fig.ethereum.types}
\ \\
\lstinputlisting[language=solidity, firstline=3]{solidity-include.txt}
\caption{Solidity implementation of the tradable coin}
\label{fig.solidity.code}
\end{figure}

\begin{rmrk}
The Ethereum blockchain is account-based: it can be thought of as a state machine whose transitions modify a global state of contracts containing functions and data.
We propose in Figure~\ref{fig.ethereum.types} a type presentation of Idealised Ethereum, parallel to the Idealised EUTxO of Figure~\ref{fig.ieutxo.types}.
In brief:
\begin{enumerate*}
\item
$\tf{Sender}$, $\tf{Address}$, and $\tf{Value}$ are natural numbers $\mathbb N$.
$\tf{Datum}$ is intended to be structured data which we G\"odel encode for convenience.
$\tf{FunctionName}$ and $\tf{ContractName}$ are names, which again for simplicity we encode as $\mathbb N$.
\item
A contract has a name and a finite set of functions (Notation~\ref{nttn.pointed}(\ref{pointed.finite.set})).
\item
A function has a name and maps an input to a blockchain transformer.
\item
A blockchain is a finite collection of contracts, to each of which is assigned a state of a value (a balance of ether) and some structured data.
We intend that a valid blockchain satisfy some consistency conditions, including that each contract in it have a distinct name.

\end{enumerate*}
Thus the contract \inlinesolidity{Changing} (line~1 of Figure~\ref{fig.solidity.code}), once deployed, is located on the Ethereum blockchain, with its functions and state.
This contract and its state are `in one place' on the blockchain; not spread out across multiple UTxOs as in Cardano.
There is no need for the state-chip mechanism from Remark~\ref{rmrk.state.chip}.
\end{rmrk}

\begin{rmrk}
We now briefly read through the code:
\begin{enumerate}
\item
\inlinesolidity{address} is an address for the Issuer.
It is \inlinesolidity{payable} (it can receive ether) and \inlinesolidity{public} (its value can be read by any other function on the blockchain).\footnote{Any data on the Ethereum blockchain is public in the external sense that it can be read off the binary data of the blockchain as a file on a machine running it.
However, not all data is \inlinesolidity{public} in the internal sense that it can be accessed from any code running on the Ethereum virtual machine.}
\item
\inlinesolidity{constructor} is the function initialising the contract. It gives \inlinesolidity{issuer} (who triggers the contract) all the new token.
\item
\inlinesolidity{send} is a function to send money to another address.
Note that this is not in the Plutus code; this is because we got it `for free' as part of Cardano's in-built support for currencies (this is also why Idealised EUTxO has the type $\tf{Value}$ in Figure~\ref{fig.ieutxo.types}).

\item
\inlinesolidity{buy} on line~20 is analogue to the \inlinesolidity{Buy} transition in Figure~\ref{fig.plutus.code}.
\end{enumerate}
\end{rmrk}

\subsection{Discussion}
\label{subsect.discussion}

\begin{rmrk}
\label{rmrk.cannot.fix}
In line~21 of Figure~\ref{fig.solidity.code} we calculate the tokens purchased using a division of \inlinesolidity{value} (the total sum paid) by \inlinesolidity{price}.

In contrast, in line~29 of Figure~\ref{fig.plutus.code} we calculate the sum by multiplying the number of tokens by the price of each token.

There is a reason for this: we cannot perform the multiplication in the Solidity \inlinesolidity{buy} code because we do not actually know the price per token at the time the function acts to transition the blockchain.
We can access a price at the time of invoking \inlinesolidity{buy} by querying \inlinesolidity{Changing.price()}, but by the time that invocation reaches the Ethereum blockchain and acts on it, the blockchain might have undergone transitions, and the price might have changed.

Because Ethereum is stateful and has nothing like Remark~\ref{rmrk.utxo.stateless} and Theorem~\ref{thrm.might.as.well}, this cannot be fixed.
\end{rmrk}

\begin{rmrk}
\label{rmrk.can.fix}
One might counter that this could indeed be fixed, just by changing \inlinesolidity{buy} to include an extra parameter which is the price that the buyer expects to pay, and if this expectation is not met then the function terminates.
However, the issuer controls the contract (\inlinesolidity{issuer{=}\ msg.sender} on line~9 of Figure~\ref{fig.solidity.code}), including the code for \inlinesolidity{buy}, so this safeguard can only exist at the \emph{issuer's} discretion---and our issuer, whether through thoughtlessness or malice, has not included it.

In Cardano the buyer has more control, because by Theorem~\ref{thrm.might.as.well} a party issuing a transaction knows (up to observational equivalence) precisely what the inputs and outputs will be; the only question is whether it successfully attaches to the blockchain (cf. Remark~\ref{rmrk.utxo.stateless} and Subsection~\ref{subsect.name-carrying}).

Another subtle error in the Ethereum code is that the \inlinesolidity{/} on line~21 is integer division, so there may be a rounding error.\footnote{It would be unheard of for such elementary mistakes to slip into production code; and even if it did happen, it is hardly conceivable that such errors would happen repeatedly across a wide variety of programming languages.

That was sarcasm, but the point may bear repeating: programmer error and programming language design are two sides of a single coin.}
\end{rmrk}

\begin{rmrk}
\label{rmrk.ethereum.errors}
The Ethereum code contains two errors, but the emphasis of this discussion is not that it is possible to write buggy code (which is always true) rather, we draw attention to how and why the underlying accounts-based structure of Ethereum invites and provides cover for certain errors \emph{in particular}.

On the other hand, the Plutus code is more complex than the Ethereum code.
Plutus is arguably conceptually beautiful, but the pragmatics as currently implemented are fiddly and the amount of boilerplate required to get our Plutus contract running much exceeds that required for our Ethereum contract.
This may improve as Plutus matures and libraries improve---Plutus was streamlined as this paper was written, in at least one instance because of a suggestion from this paper\footnote{---the need for \inlinehask{runGuardedStep}.}---but it remains to be seen if the gap can be totally closed.

So Ethereum is simple, direct and alluring, but can be dangerous, whereas Plutus places higher burdens on the programmer but enjoys some good mathematical properties which can enhance programmability.
This is the tradeoff.
\end{rmrk}

\begin{rmrk}[Contracts on the blockchain]
\label{rmrk.contracts}
It is convenient to call the code referred to in Figures~\ref{fig.plutus.code} and~\ref{fig.solidity.code} \emph{contracts}, but this is in fact an imprecise use of the term: a contract is an entity on the blockchain, whereas the figures describe \emph{schemas} or \emph{programs} which, when invoked with parameters, attempt to push a contract onto the blockchain.
\begin{itemize*}
\item
For Plutus, each time we invoke the contract schema for some choice of values for \inlinehask{Config}, the schema tries to creates a transaction on the blockchain which generates a UTxO which carries a state-chip \emph{and} an instance of the contract.
The contract is encoded in the monetary policy of the state-chip (which enforces linearity once created) and in the Validator of that UTxO.
The issuer's address is encoded in the Validator field of the UTxO.
A mathematical presentation of the data structures concerned is in Figure~\ref{fig.ieutxo.types}.
\item
For Ethereum, Figure~\ref{fig.solidity.code} also describes a schema which deploys a contract to the blockchain.
Each time we invoke its constructor we generate a specific instance of \inlinesolidity{Changing}.
Configuration data, given by the constructor arguments,
is simply stored as values of the global state variables of that instance, such as \inlinesolidity{issuer} (line~2 of Figure~\ref{fig.solidity.code}).
\end{itemize*}
\end{rmrk}

\begin{rmrk}[State \& possible refinements]
\label{rmrk.plutus.error}
In Solidity, state is located in variables in contracts.
To query a state we just query a variable in a contract of interest, e.g. \inlinesolidity{price} or \inlinesolidity{balances} in contract \inlinesolidity{Changing} in Figure~\ref{fig.solidity.code}.

In Plutus, state can be more expensive.
State is located in unspent outputs; see the $\tf{Datum}$ and $\tf{Value}$ fields in $\tf{Output}$ in Figure~\ref{fig.ieutxo.types}.
In a transaction that queries or modifies state, we consume any UTxOs containing state to which we refer, and produce new UTxOs with new state.
Even if our queries are read-only, they destroy the UTxOs read and create new UTxOs.\footnote{This is distinct from a user inspecting the contents of UTxOs from outside the blockchain, i.e. by reading state off the hard drive of their node or Cardano wallet.}

Suppose a UTxO contains some popular state information, so multiple users submit transactions referencing it.
Only one transaction can succeed and be appended and
the other transactions will \emph{not} link to the new state UTxO which is generated---even if it contains the same state data as the older version.
Instead, the other transactions will fail, because the particular incarnation of the state UTxO to which they happened to link, has been consumed.

This could become a bottleneck.
In the specific case of our Plutus contract, it could become a victim of its own success if our token is purchased
with high enough frequency, since access to the state UTxO and the tokens on it could become contested and slow and expensive in user's time.
The state UTxO becomes, in effect, constantly locked by competing users---though note the the blockchain would still be behaving correctly in the sense that nobody can lose tokens.

For a more scaleable scenario, some library for parallel redundancy with a monetary policy for the state chip(s) implementing a consensus or merging algorithm, might be required.
This is future work.

One more small refinement: every token ever bought must go through the contract at the official price---even if it is the issuer trying to withdraw it (recall from Definition~\ref{defn.the.spec}(\ref{initial.supply}) that we allow only one issuance).
In practice, the issuer might want to code a way to get their tokens out without going through \inlinehask{buy}.
\end{rmrk}

\begin{rmrk}[Resisting DOS attacks]
An aside contrasting how Cardano and Ethereum manage fees to prevent denial-of-service (DOS) attacks:
In Cardano, users add transactions to the blockchain for a fee of $a+bx$ where $a$ and $b$ are constants and $x$ is the size of the transaction in bytes.
This prevents DOS attacks because e.g. if we flood the blockchain with $n$ transactions containing tiny balances, we pay $\geq na$ in fees.
In Ethereum, storage costs a fee (denominated in \emph{gas}); if we inflate \inlinesolidity{balances}---which is a finite but unbounded mapping from addresses to integers---with $n$ tiny balances, we may run out of gas.
\end{rmrk}

\subsection{Summary of the critical points}

\begin{enumerate*}
\item
The errors in the Solidity code are not necessarily obvious.
\item
The errors cannot be defended against without rewriting the contract, which requires the seller's cooperation (who controls the contract and might even have planted the bug).
\item
In Plutus, the \emph{buyer} creates a transaction and determines its inputs and outputs.
A transaction might be rejected if the available UTxOs change---but if it succeeds then the buyer knows the outcome.
Thus if the user of a contract anticipates an error (or attack) then they can guard against it independently of the contract's designer.

In Ethereum this is impossible: a buyer can propose a transition to the global virtual machine by calling a function on the Ethereum blockchain, but the designer of the contract designed that function---and because of concurrency, the buyer cannot know the values of the inputs to this function at the time of its execution on the blockchain.
\end{enumerate*}

\section{Conclusions}

We hope this paper will provide a precise yet accessible 
entry point for interested readers, and a useful 
guide to some of the design considerations in this area.

We have seen that Ethereum is an accounts-based blockchain system whereas Cardano (like Bitcoin) is UTxO-based, and we have implemented a specification (Definition~\ref{defn.the.spec}) in Solidity (Section~\ref{sect.solidity.implementation}) and in Plutus (Section~\ref{sect.plutus.implementation}), and we have given a mathematical abstraction of Cardano, \emph{idealised EUTxO } (Section~\ref{sect.idealised.cardano}).
These have raised some surprisingly non-trivial points, both quite mathematical (e.g. Subsection~\ref{subsect.name-carrying})
 and more implementational (e.g. Subsection~\ref{subsect.discussion}), which are discussed in the body of the paper.

The accounts-based paradigm lends itself to an imperative programming style: a smart contract is a \emph{program} that manipulates a global state mapping global variables (`accounts') to values.
The UTxO-based paradigm lends itself to a functional programming style: the smart contract is a \emph{function} that takes a UTxO-list as its input, and returns a UTxO-list as its output, with no other dependencies.
See Theorem~\ref{thrm.might.as.well} and the preceding discussion and lemmas in Subsection~\ref{subsect.observational.equivalence} for a mathematically precise rendering of this intuition, and Subsection~\ref{subsect.name-carrying} for supplementary results.

Smart contracts are naturally concurrent and must expect to scale to thousands, if not billions, of users,\footnote{How to reach distributed consensus in such an environment is another topic, with its own attack surface.  Cardano uses Ouroboros consensus~\cite{KRDO17}.}
but the problems inherent to concurrent imperative programming are well-known and predate smart contracts by decades.
This is an issue with Ethereum/Solidity and it is visible even in our simple example.
Real-life examples, like the infamous DAO hack\footnote{The DAO hack stole approximately 70 million USD from Ethereum, which chose to revert the theft using a hard fork of the Ethereum blockchain~\cite{10.1007/978-3-662-54455-6_8}.},
are clearly related to the problem of transactions having unexpected consequences in a stateful, concurrent environment.

Cardano's UTxO-based structure invites a functional programming paradigm, and this is reflected in Plutus.
The price we pay is arguably an increase in conceptual complexity.
More generally, in Remarks~\ref{rmrk.contracts} and~\ref{rmrk.plutus.error} we discussed how state is handled in Plutus and its UTxO model:
further clarification of the interaction of state with UTxOs is important future work.

In summary: accounts are easier to think about than UTxOs, and imperative programs are easier to read than functional programs, but this ease of use makes us vulnerable to new classes of errors, \emph{especially} in a concurrent safety-critical context.
Such trade-offs may be familiar to many readers with a computer science or concurrency background.


\end{document}